\documentclass[12pt]{article}

\usepackage{amsfonts}
\usepackage{amsmath}
\usepackage{amssymb}
\usepackage{amsthm}
\usepackage{mathtools}
\usepackage{authblk}
\usepackage{appendix}
\usepackage{bm}
\usepackage{bbm}
\usepackage{xcolor}
\usepackage{gensymb}
\usepackage{natbib}
\usepackage{subcaption}
\usepackage{lineno}

\usepackage{graphicx,url}

\pdfminorversion=4

\newtheorem{theorem}{Theorem}[section]

\usepackage[utf8]{inputenc}

\newcommand{\bE}{\textbf{E}}
\newcommand{\bI}{\textbf{I}}
\newcommand{\bJ}{\textbf{J}}
\newcommand{\bX}{\textbf{X}}
\newcommand{\bY}{\textbf{Y}}

\newcommand{\bS}{\textbf{S}}

\newcommand{\bZ}{\textbf{Z}}

\newcommand{\bU}{\textbf{U}}

\newcommand{\bH}{\textbf{H}}
\newcommand{\bW}{\textbf{W}}
\newcommand{\bL}{\textbf{L}}

\newcommand{\bA}{\textbf{A}}

\newcommand{\bB}{\textbf{B}}

\newcommand{\bD}{\textbf{D}}

\newcommand{\bP}{\textbf{P}}

\newcommand{\mX}{\mathrm{X}}
\newcommand{\mY}{\mathrm{Y}}
\newcommand{\mD}{\mathrm{D}}
\newcommand{\mP}{\mathrm{P}}
\newcommand{\mS}{\mathrm{S}}
\newcommand{\mT}{\mathrm{T}}
\newcommand{\mV}{\mathrm{V}}
\newcommand{\mB}{\mathrm{B}}
\newcommand{\mN}{\mathrm{N}}
\newcommand{\mZ}{\mathrm{Z}}

\newcommand{\var}{\text{var}}
\newcommand{\cov}{\text{cov}}

\newcommand{\bbeta}{\bm{\beta}}

\newcommand{\T}{\intercal}

\newcommand{\zero}{\bm{0}}

\graphicspath{ {./images/} }

\newcommand{\blind}{1}
\newcommand{\spacing}{1.1}

\addtolength{\oddsidemargin}{-.5in}%
\addtolength{\evensidemargin}{-1in}%
\addtolength{\textwidth}{1in}%
\addtolength{\textheight}{1.7in}%
\addtolength{\topmargin}{-1in}%

\begin{document}

\def\spacingset#1{\renewcommand{\baselinestretch}%
{#1}\small\normalsize} \spacingset{1}

\if1\blind
{
  \title{\bf Coexchangeable process modelling for uncertainty quantification in joint climate reconstruction}
   \author[1]{Lachlan Astfalck}
    \author[2,3]{Daniel Williamson}
    \author[1]{Niall Gandy}
    \author[1]{Lauren Gregoire}
    \author[1]{Ruza Ivanovic}
    \affil[1]{School of Earth and Environment, The University of Leeds, Leeds, UK}
    \affil[2]{Department of Mathematical Sciences, Exeter University, Exeter, UK}
    \affil[3]{The Alan Turing Institute, British Library, London, UK}
    
    \setcounter{Maxaffil}{0}
    \renewcommand\Affilfont{\itshape\small}
  \maketitle
} \fi

\if0\blind
{
  \bigskip
  \bigskip
  \bigskip
  \begin{center}
    {\LARGE\bf Coexchangeable process modelling for uncertainty quantification in joint climate reconstruction}
\end{center}
  \medskip
} \fi

\bigskip
\begin{abstract}
Any experiment with climate models relies on a potentially large set of spatio-temporal boundary conditions. These can represent both the initial state of the system and/or forcings driving the model output throughout the experiment. These boundary conditions are typically fixed using available reconstructions in climate modelling studies; however, in reality they are highly uncertain, that uncertainty is unquantified, and the effect on the output of the experiment can be considerable. We develop efficient quantification of these uncertainties that combines relevant data from multiple models and observations. Starting from the coexchangeability model, we develop a coexchangeable process model to capture multiple correlated spatio-temporal fields of variables. We demonstrate that further exchangeability judgements over the parameters within this representation lead to a Bayes linear analogy of a hierarchical model.  We use the framework to provide a joint reconstruction of sea-surface temperature and sea-ice concentration boundary conditions at the last glacial maximum (23--19 kya) and use it to force an ensemble of ice-sheet simulations using the FAMOUS-Ice coupled atmosphere and ice-sheet model. We demonstrate that existing boundary conditions typically used in these experiments are implausible given our uncertainties and demonstrate the impact of using more plausible boundary conditions on ice-sheet simulation.
\end{abstract}

\noindent%
{\it Keywords:}  Bayes linear methods, exchangeability analysis, multi model ensemble
\vfill

\newpage
\spacingset{\spacing}

\section{Introduction} \label{sec:intro}

Numerical experiments are vital tools to climate science. Knowledge of physical processes embedded into software can simulate events that are otherwise impossible to observe at the required spatial and temporal resolutions. Most climate simulators utilise boundary conditions to represent the non-computed physical processes on which the simulator relies. Generally, boundary conditions are fixed using a reference run or runs from existing multi-model ensembles (MMEs) that explicitly model the boundary condition's physical process. For example, to run an ice-sheet model over the last glacial maximum (LGM) requires climate variables such as temperature, and precipitation \citep{gregoire2012deglacial} that can be obtained using the Paleoclimate Model Intercomparison Project (PMIP) ensemble of simulator runs \citep{ivanovic2016transient}. MMEs can be biased and do not necessarily span the uncertainty of the physical process \citep{salter2018quantifying}. 

If modelling boundary conditions is viewed as a statistical reconstruction problem, there is a rich literature in statistics that attempts to combine data from multiple models and historical observations to infer spatio-temporal climate properties. \cite{rougier2013second} present a generalised framework, therein termed the coexchangeable model, where exchangeability judgements over an MME along with an assumption that each model in the ensemble has the same a priori covariance with the field they aim to represent lead to a simple model that can be used to estimate a true unobserved process. Second order inference via Bayes linear methods \citep{goldstein2007bayes} for this model ensures scalable reconstructions for a spatio-temporal field, and the requirement for only prior means and variances implies an easier prior modelling task.  The coexchangeable model has seen some interest in more recent research, such as in \cite{sansom2021constraining} where it is used to model emergent constraints for future climate projections. Alternative to the methodology of \cite{rougier2013second}, a line of research stemming from \cite{chandler2013exploiting} proposes a similar fully probabilistic model with individual weightings of each simulator rather than via an assumption of exchangeability. As noted in \cite{rougier2013second}, the model in \cite{chandler2013exploiting} is a special case of the coexchangeable model. Other statistical reconstructions of boundary conditions exist; however, they are either derived from simulator data alone and are limited by the span of the ensemble \citep[e.g.][]{salter2018quantifying}, or are reconstructed from large and dense datasets \citep[e.g.][]{liu2017dimension,zhang2020bayesian,sha2019bayesian} and so are inappropriate here given the relative data sparsity at the LGM. There are a family of post-processing methodologies in the climate sciences, therein termed offline data assimilation, that adjust model output with direct observations or proxy data based on ensemble methods \citep[e.g.][]{hakim2016last,steiger2014assimilation}. We show that these methodologies, as well, may be subsumed by the coexchangeable model. Online data assimilation techniques are not possible for large climate ensembles: the simulators are run on large supercomputers, are highly bespoke, and very limited access to the model architectures is given outside of the modelling groups.

Joint reconstruction of two or more physically distinct fields may, from a theoretical perspective, appear straightforward within hierarchical Bayesian frameworks such as those proposed by \cite{chandler2013exploiting}. However, scalability becomes a much bigger challenge. This is particularly so when the fields are highly dependent, as many in climate are. In this study we jointly model sea-surface temperature (SST) and sea-ice concentration (SIC) to drive a coupled atmosphere and ice-sheet model, where the dependence between the boundary conditions is critically important. The need for scalable inference and generality makes a coexchangeable approach desirable; though extensions are required to incorporate the additional assumptions of conditional exchangeability between the fields. Thus, we develop a coexchangeable process model that offers a Bayes linear analogy to the natural Bayesian hierarchical model for the problem. From simple and natural exchangeability judgements, we develop efficient, scalable inference for joint reconstruction. 

This article proceeds as follows. Section~\ref{sec:data_intro} provides background and context to the applied modelling problem. Section~\ref{sec:theory} reviews the existing statistical literature on Bayes linear statistics and exchangeability analysis. Section~\ref{sec:joint} extends the coexchangeable model for coupled processes, providing a Bayes linear analogy to the Bayesian hierarchical model for which we then present scalable inference via geometric updating. Section~\ref{sec:application} uses the developed methodology to reconstruct SST and SIC. Section~\ref{sec:influence} discusses the results of using the reconstructions of SST and SIC to simulate ice sheet-atmosphere interactions at the LGM, and Section~\ref{sec:conclusion} concludes. Code and data are provided as part of an \texttt{R} package downloadable from \if1\blind
{ \texttt{github.com/astfalckl/exanalysis}.} \fi
\if0\blind
{ \texttt{[redacted for blind review]}.} \fi

\spacingset{1}

\section{Reconstructing boundary conditions of SST and SIC} \label{sec:data_intro}

\spacingset{\spacing}

Modelling and understanding paleoclimate events is crucial in understanding potential effects of future climate change: one way to calibrate climate and ice-sheet behaviour is to simulate the past \citep{kageyama2017pmip4,harrison2015evaluation,schmidt2014using}. The last major deglaciation ($\sim$21--7 kya) is the most natural period to study, given the relatively rich source of observations on the climate and ice sheets compared to the more distant past. To begin to study deglaciation, an ice sheet needs to be grown within the model under the steady state boundary conditions. The ice-sheet is grown by forcing an ice-sheet model with the relatively stable conditions of the LGM ($\sim$23--19 kya) until convergence. Seasonal variation is known to play a significant role in this process \citep{joughin2012ice}, so the boundary conditions contain the seasonal cycles. The character of the resulting ice sheets are sensitive to these boundary conditions and so it is crucial to use an accurate reconstruction, realistic uncertainty quantification and a method for perturbing the boundary conditions under uncertainty to force the ice sheet model.

Existing reconstructions of LGM SST and SIC are predominantly derived from either paleodata syntheses or via numerical simulation. Data only reconstructions include CLIMAP \citep{climap1981seasonal}, GLAMAP \citep{sarnthein2003overview}, MARGO \citep{kucera2005multiproxy}, and more recently \cite{paul2020global}. As these global reconstructions are based solely on proxy-based paleodata, they are subject to large measurement error, biases in polar regions, incomplete spatial coverage, and poor temporal resolution. Coupled simulations of the LGM are useful to ensure spatio-temporal coverage and consistency of SST and SIC, but these can also be very different from observations in critical regions for growing ice sheets \citep{salter2018quantifying}. PMIP is the most notable experimental body that guide protocols for coupled ocean-atmosphere models to simulate the LGM climate \citep{kageyama2017pmip4,kageyama2021pmip4}. The PMIP community has produced MMEs for different phases of the project run with different generations of model and slight adjustments in inputs. It is common practice to use a PMIP simulator output to directly force ice sheet simulations \citep[e.g.][]{gregoire2016abrupt}. One study to use both paleodata and simulators is \cite{tierney2020glacial} who adjust a model with observations to provide SST reconstruction, with uncertainty, based on a single simulator. Differences in model physics typically induce more variability than perturbations to the parameters of a single model. Therefore, reconstructions based on a single model may be biased and overconfident. Our methodology allows for the first joint reconstruction of SST and SIC that coherently combines the PMIP models (we use iterations PMIP3 and PMIP4) with available proxy data to deliver boundary conditions with uncertainty quantification.

We use three sources of data to inform our reconstructions: PMIP simulations, SST proxy data, and maximum sea-ice extents. As with \cite{rougier2013second} and \cite{sansom2021constraining}, we select a single representative simulation from each modelling group that contributed to the PMIP3 and PMIP4 MMEs in order to make the assumption of prior exchangeability reasonable. We use the MARGO SST proxy data compilation where sea-bed sediment core samples were used to infer the true LGM SST, supplemented with some more recent data from \cite{benz2016last}. Note, many of the SST measurements are far from any icesheets but can still useful in constraining a global climate simulation. Finally, we use the Southern Hemisphere maximum sea-ice extent as published in \cite{gersonde2005sea}; Northern Hemisphere extents are only available for specific regions \citep[e.g.][]{de2005reconstruction,crosta1998application,xiao2015sea}, and so we use a simple estimate of the Northern Hemisphere sea-ice extent provided by subject matter experts. SST proxy data and maximum sea-ice extents are shown in Figure~\ref{fig:data_sst} in Section~\ref{sec:application} alongside the specification of the statistical model.

\spacingset{1}

\section{Existing statistical methodology} \label{sec:theory}

\spacingset{\spacing}

SST and SIC are dependent quantities: warmer waters will support less sea-ice and vice-versa. The physics that govern the relationship between SST and SIC is represented by a series of partial differential equations, the structure and parametrisations of which change across the simulators and with reality. For example, certain simulators are predisposed to supporting more or less sea-ice at a given SST. The dependencies between SST and SIC are naturally modelled as an emergent property of the model and captured within a hierarchical framework using conditional probability statements. However, in climate modelling, specification of the probability distributions is not obvious and computation for large climate models is prohibitive. An alternative view treats expectation, rather than probability, as the primitive quantity \citep{de1975theory}; probabilities are then the expectations of indicator functions for events. This motivates second order approaches such as Bayes linear methods \citep{goldstein2007bayes}, where inference concerns expectations and variances directly rather than as a by-product of probabilistic inference. Model specification thus only concerns specifying expectation and variance, rather than the full probability distributions. \cite{rougier2013second} show the advantages of second order specification for climate modelling and further show that based on certain judgements of exchangeability, efficient methods for inference of MMEs may be formed. Our application requires extending the theory of \cite{rougier2013second} to exchangeable processes, that is, second order hierarchical models. First, we review the existing methodology: Section~\ref{sec:bayeslin} provides a short introduction to Bayes linear theory, Section~\ref{sec:bl_exchangeable} defines second order exchangeability, and Section~\ref{sec:rougier} presents the coexchangeable model of \cite{rougier2013second}.

\spacingset{1}
\subsection{Bayes Linear Theory} \label{sec:bayeslin} 
\spacingset{\spacing}

Under the Bayes linear paradigm, beliefs on random quantities are described via expectations and variances and are then \textit{adjusted} by data. The belief specifications define an inner product space in which the random quantities live; the inner product space is analogous to a probability distribution in probabilistic Bayesian analysis. Consider random quantity, $\mX$, with observations $\mX^i$, defined on the Hilbert space $\mathcal{X}$ endowed with inner product $\langle \mX, \mY \rangle = \mathbb{E}[\mX^\T \mY]$. Denote by $\mD$ the collection of observed data, $\mD = (\mX^1, \dots, \mX^m)$, as a concatenated vector of the observations. In general, the $\mX^i$ do not necessarily have the same length or a-priori belief specifications and can represent multiple sources of data that inform $\mX$. The random quantity of interest, $\mX$, may be multivariate or univariate, in which case its inner product is simply $\mathbb{E}[\mX \mY]$. Later, $\mX$ will denote the unknown spatio-temporal field of SST, and the $\mX^i$, the observed simulator outputs over which we will assume exchangeability. We write the adjusted expectation in terms of $\mX$ and $\mD$ as $\mathbb{E}_\mD[\mX]$, that is, the expectation of our beliefs $\mX$ adjusted by data $\mD$. Adjusted expectation is defined as the element in the subspace $\mathcal{D} = \text{span}[1, \mD]$ that minimises $\|\mX - \mathbb{E}_\mD[\mX]\|$ and has solution
\begin{equation} \label{eqn:adj_exp}
  \mathbb{E}_\mD[\mX] = \mathbb{E}[\mX] + \cov[\mX,\mD] \var[\mD]^\dagger(\mD - \mathbb{E}[\mD])
\end{equation}
where $\var[\mD]^\dagger$ is any pseudo-inverse of $\var[\mD]$, most commonly the Moore-Penrose inverse. Equation~(\ref{eqn:adj_exp}) describes the orthogonal projection of each element of our beliefs, $\mX$, onto $\mathcal{D}$. The adjusted variance, $\var_\mD[\mX]$, is the outer product $\mathbb{E}\left[\left(\mX - \mathbb{E}_\mD[\mX]\right)\left(\mX - \mathbb{E}_\mD[\mX]\right)^\T\right]$ given $\mathbb{E}_\mD[\mX]$ in (\ref{eqn:adj_exp}) and is
\begin{equation} \label{eqn:adj_var}
  \var_\mD[\mX] = \var[\mX] - \cov[\mX,\mD] \var[\mD]^\dagger \cov[\mD,\mX].
\end{equation}
Derivations of (\ref{eqn:adj_exp}) and (\ref{eqn:adj_var}) are found in Sections 12.4--12.5 of \cite{goldstein2007bayes}.

\spacingset{1}
\subsection{Bayes linear analysis of exchangeable data} \label{sec:bl_exchangeable}
\spacingset{\spacing}

Exchangeability for a sequence of random quantities within a probabilistic Bayesian analysis represents a simple a priori symmetry judgement that implies that any finite sub-collection of quantities within the sequence have the same distribution. Second order exchangeability for such a sequence is an analogue for judgements when expectation is primitive, and implies that any finite sub-collection of quantities share the same joint inner product space, i.e. prior expectation and variance. If data, $\mD$, are second order exchangeable we may, according to the second order representation theorem \citep{goldstein1986exchangeable}, write
\begin{equation} \label{eqn:exchangeability}
 \mathrm{X}^i = \mathcal{M} + \mathcal{R}^i
\end{equation}
for a common mean $\mathcal{M}$ and uncorrelated residuals $\mathcal{R}^i$. Second order exchangeability implies that all observed $\mX^i$ are of the same length, $\cov[\mX^i, \mX^{i'}] = \cov[\mX, \mX']$ and $\var[\mX^i] = \var[\mX]$, $\forall i,i'$. For second order exchangeable sequences there is predictive sufficiency for updating beliefs on $\mX$ by only updating $\mathcal{M}$ by the data $\mD$. Geometrically, this means that given $\mathcal{M}$, $\mD$ and $\mX$ are orthogonal and thus uncorrelated; these results are established in \cite{goldstein1986exchangeable}. Further, the sample mean, $\bar{\mX} = \frac{1}{m} \sum_{i=1}^m \mX^i$, is Bayes linear sufficient for updating beliefs on $\mathcal{M}$, and consequently on $\mX$. Second order exchangeability affords two main advantages: first, belief specifications are simplified; and second, sufficiency of the sample mean makes inference independent of the number of samples $m$.

\spacingset{1}
\subsection{Exchangeability analysis for multi-model ensembles} \label{sec:rougier}
\spacingset{\spacing}

\cite{rougier2013second} leverage second order exchangeability to describe a Bayes linear approach for modelling MME's. In what follows, we explicitly define multivariate quantities, represented by the bold font. Let $\mathbb{X} \coloneqq \{\bX^1, \dots, \bX^m \}$ be a collection of $q$-dimensional outputs from the $m$ simulators that form the MME; $\bX^*$, the true unobserved process that the simulators aim to model; and $\bZ_\bX$, the noisy and incomplete observation of $\bX^*$. The model requires only two assumptions: first, that the $\bX^i$ are second order exchangeable and, second, that the $\bX^i$ are \textit{coexchangeable} with the truth $\bX^*$, implying $\cov[\bX^*, \bX^i] = \Sigma, \; \forall i$. As in (\ref{eqn:exchangeability}), exchangeability within $\mathbb{X}$ implies
\begin{equation} \label{eqn:X}
  \bX^i = \mathcal{M}_{\bX} + \mathcal{R}^i_{\bX}, \hspace{5mm} i = 1, \dots, m,
\end{equation}
where $\mathcal{M}_{\bX}$ is a shared mean term, the $\mathcal{R}^i_{\bX}$ are the zero-mean, uncorrelated residuals of each simulator with common variance, and the $\mathcal{M}_{\bX}$ and $\mathcal{R}^i_{\bX}$ are uncorrelated. Coexchangeability between $\bX^*$ and $\mathbb{X}$ implies sufficiency of $\mathcal{M}_{\bX}$ for $\bX^*$. This allows us to write 
\begin{equation} \label{eqn:Y}
  \bX^* = \bA \mathcal{M}_{\bX} + \bU_\bX
\end{equation}
where $\bA$ is a known matrix, and $\bU_\bX$ represents the ensemble discrepancy that is uncorrelated with $\mathcal{M}_{\bX}$ and the $\mathcal{R}^i_{\bX}$. The data, $\bZ_\bX$ are modelled as
\begin{equation} \label{eqn:Z}
  \bZ_\bX = \bH_\bX \bX^* + \bW_\bX
\end{equation}
for measurement error $\bW_\bX$, and known incidence matrix $\bH_\bX$. The statistical model defined by (\ref{eqn:X})--(\ref{eqn:Z}) will hereafter by referred to as the \textit{coexchangeable model}; a graphical representation is provided in Figure~\ref{fig:flowchart}a at the end of Section~\ref{sec:joint}. Inference for this model makes use of Bayes linear sufficiency of the ensemble mean, $\bar{\bX} = \frac{1}{m}\sum_{i=1}^m \bX^i$, for updating by $\mathbb{X}$ and is therefore very efficient. Updated beliefs on $\bX^*$ is done in two stages: first the update of our beliefs by the ensemble, and second by the data; the inferential procedure is outlined in the supplementary material. Climate post-processing routines, as in \cite{hakim2016last} and \cite{steiger2014assimilation}, are subsumed by the coexchangeable model, and simply describe (\ref{eqn:Z}) with beliefs $\mathbb{E}[\bX^*] = \bar{\bX}$ and $\var[\bX^*] = \bP$. Calculations that rely on ensemble methods approximate $\bP$ by the empirical covariance matrix of the ensemble, or some linear transformation thereof \citep[see definitions in][]{whitaker2002ensemble,snyder2022optimal}. Our contributions, that build on the coexchangeable model, may also be considered as similar developments to such post-processing routines popular in the climate sciences.


\spacingset{1}
\section{The coexchangeable process model} \label{sec:joint}
\spacingset{\spacing}

\subsection{Exchangeable processes} \label{sec:exchangeable_processes}

Consider each simulator as producing output pairs of $q$-dimensional fields $(\bX^i, \bY^i)$ with the corresponding fields in reality denoted $(\bX^*, \bY^*)$, for which we have partial observations $\bZ_\bX$ and $\bZ_\bY$, made with some error. In what follows we define SST by $\bX$ and SIC by $\bY$. Changes in SST and SIC are driven by complex physical relationships. We capture structural dependencies between random quantities $(\bX, \bY)$, by first only imposing the coexchangeable model for $\bX^i, \bX^*$, and then considering exchangeability judgements over the processes $\bY^i$, given $\bX^i$, and coexchangeability of $\bY^*$, given $\bX^*$. Whilst it is enticing to consider the coexchangeable model over both fields simultaneously, it is deficient here in two ways. First, due to the dependence of $\bY^i$ on $\bX^i$, the assumption of second order exchangeability between the $\bY^i$ is violated. Second, the natural way to construct the model is to define the process of $\bY^i$ given $\bX^i$ parametrised by some $\bbeta^i$ as in a hierarchical model. Inference on the $\bbeta^i$ 
is of scientific interest; however, the coexchangeable model does not allow for this. The model sophistication required to handle the hierarchical structure requires nuanced judgements of conditional second order exchangeability. In this section, we simply state the exchangeable process model, and defer detailed discussion of the conditional exchangeability judgements to Section~\ref{sec:repeated_obs}.

For any single simulator we model $\mathbb{E}[\bY^{i}] = \mathcal{M}(\bX^{i}; \bm{\bbeta}^i)$ as a process of $\bX^{i}$ parametrised by some $\bm{\bbeta}^i$, specific to the $i$th simulator. The mean function $\mathcal{M}(\bX^i; \bbeta^i)$ may represent any relationship between $\bX^i$ and $\bbeta^i$ and, to infer $\bbeta^i$ from $\bY^i$, we need only specify a joint inner product space in which they reside. We set $\mathcal{M}(\bX^i; \bbeta^i) = \phi(\bX^i) \bbeta^i$ where $\phi(\cdot)$ maps $\bX^i$ to some specified family of basis functions. In our application, we use a monotonic decreasing spline basis for $\phi(\cdot)$ to reflect the property that warmer SSTs will tend to lead to less SICs. We may then write
\begin{equation} \label{eqn:exchangeability_Y}
  \bY^{i} = \phi(\bX^i) \bbeta^i + \mathcal{R}^{i}_{\bY}
\end{equation} 
where $\mathcal{R}^{i}_{\bY}$ is some associated residual term. A simple example to consider is where $\bX^{i}$ and $\bY^{i}$ represent temporal observations at a single location, $\phi(\bX^i) = \bX^i$ and $\bbeta^i$ is a scalar. This simply models $\bX^{i}$ and $\bY^{i}$ as scalar multiples of each other, where the multiplier is simulation dependent. Such examples are common when modelling emergent constraints in climate modelling. Note, the choice to model $\mathcal{M}(\bX^i, \bbeta^i)$ linear in $\bbeta^i$ is not required by the theory but it allows for natural specification of the inner product space and leads to sufficiency arguments, discussed below, that aid computation. To complete the analogy with a Bayesian hierarchical model, we impose second order exchangeability over $\bbeta^1, \bbeta^2, \dots$ so that
\begin{equation} \label{eqn:exchangeability_beta}
  \bbeta^i = \mathcal{M}_{\bbeta} + \mathcal{R}^i_{\bbeta},
\end{equation}
with expectation $\mathbb{E}[\bbeta^i] = \mathcal{M}_{\bbeta}$ and covariance $\cov[\bbeta^i, \bbeta^{i'}] = \var[\mathcal{M}_{\bbeta}] + \mathbf{1}_{\{i = i'\}}\var[\mathcal{R}^i_{\bbeta}]$.

Conditional exchangeability does not lead to sufficiency of the sample mean and so we must calculate the belief updates in the much larger joint inner product space. Define $\Phi_i \coloneqq \phi(\bX^i)$; equations (\ref{eqn:exchangeability_Y}) and (\ref{eqn:exchangeability_beta}) imply that the joint space of the $\bY^i$ and the $\bbeta^i$ can be formed as
\begin{equation} \label{eqn:joint1}
\spacingset{1}
  \begin{bmatrix}
    \bY^1 \\
    \vdots \\
    \bY^m \\
    \bbeta^1 \\
    \vdots \\
    \bbeta^m
  \end{bmatrix} =
  \left[\begin{array}{@{}c|c@{}}
    \begin{matrix}
    \Phi_1 & \cdots & \zero_{q \times k} \\
    \vdots & \ddots & \vdots \\
    \zero_{q \times k} & \cdots & \Phi_m
    \end{matrix}
    & \zero_{qm \times k} \\
  \hline
    \zero_{km \times km} &
    \begin{matrix}
    \bJ_{m \times 1} \otimes \bI_{k}
    \end{matrix}
  \end{array}\right]
  \begin{bmatrix}
    \bbeta^1 \\
    \vdots \\
    \bbeta^m \\
    \mathcal{M}_{\bbeta}
  \end{bmatrix} + 
  \begin{bmatrix}
    \mathcal{R}^1_\bY \\
    \vdots \\
    \mathcal{R}^m_\bY \\
    \mathcal{R}^1_{\bbeta} \\
    \vdots \\
    \mathcal{R}^m_{\bbeta}
  \end{bmatrix},
\end{equation}
\spacingset{\spacing}
where $\zero_{a \times b}$ is a $a \times b$ zero matrix, $\bJ_{a \times b}$ is a $a \times b$ matrix of ones, and $\bI_a$ is the $a \times a$ identity matrix. 

Calculating the belief updates using (\ref{eqn:joint1}) is not immediately obvious as the random quantities $\bbeta^i$ appear on both the left-hand side as data and the right-hand side as unknown parameters. Noting that (\ref{eqn:exchangeability_beta}) is equivalently restated as $0 = \mathcal{M}_{\bbeta} + \mathcal{R}^i_{\bbeta} - \bbeta^i$, and using the re-parametrisation of \cite{hodges1998some}, we can re-express (\ref{eqn:joint1}) as
\begin{equation} \label{eqn:hodges_reparam}
\spacingset{1}
  \begin{bmatrix}
      \bY^1 \\
      \vdots \\
      \bY^m \\
      \zero_{km \times 1}
    \end{bmatrix} =
  \left[\begin{array}{@{}c|c@{}}
    \begin{matrix}
    \Phi_1 & \cdots & \zero_{q \times k} \\
    \vdots & \ddots & \vdots \\
    \zero_{q \times k} & \cdots & \Phi_m
    \end{matrix}
    & \zero_{qm \times k} \\
  \hline
    -\bI_{km} &
    \begin{matrix}
    \bJ_{m \times 1} \otimes \bI_{k}
    \end{matrix}
  \end{array}\right]
  \begin{bmatrix}
    \bbeta^1 \\
    \vdots \\
    \bbeta^m \\
    \mathcal{M}_{\bbeta}
  \end{bmatrix} + 
  \begin{bmatrix}
    \mathcal{R}^1_\bY \\
    \vdots \\
    \mathcal{R}^m_\bY \\
    \mathcal{R}^1_{\bbeta} \\
    \vdots \\
    \mathcal{R}^m_{\bbeta}
  \end{bmatrix}
\end{equation} \spacingset{\spacing}with the familiar linear canonical form $\bY = \bX \bB + \bE$.  As we consider the joint representation, expectation is taken jointly over the $\bY^i$ and $\bbeta^i$. The adjusted expectation, $\mathbb{E}_\bY[\bB]$, and adjusted variance, $\text{var}_\bY[\bB]$, are calculated via (\ref{eqn:adj_exp}) and (\ref{eqn:adj_var}); the joint belief specifications and specific updating equations are provided in the supplementary material. We note that the specification of (\ref{eqn:joint1}) differs from the standard approach to modelling exchangeable processes in Bayes linear statistics where inference only concerns $\mathcal{M}_{\bbeta}$ by, in effect, substituting (\ref{eqn:exchangeability_beta}) into (\ref{eqn:exchangeability_Y}) \citep[see][]{goldstein1998adjusting}. By jointly modelling the $\bbeta^i$ and $\mathcal{M}_{\bbeta}$ we provide a closer analogy to Bayesian hierarchical models. 

As we are required to update our beliefs in the joint inner product space, computation of $\mathbb{E}_\bY[\bB]$ and $\text{var}_\bY[\bB]$ can be difficult. These calculations require an order $\mathcal{O}(m^3(q+k)^3)$ matrix inversion of $\var[\bY]$, where $m$ is the number of simulations in the MME, $q$ is the dimension of the climate simulation, and $k$ is the dimension of the $\bbeta^i$. Theorem~\ref{thm:proj_sufficient} shows that a Bayes linear sufficiency argument can be made that permits a smaller computational order of $\mathcal{O}(8k^3m^3)$ for $k < q$, thus enabling efficient inference. Theorem~\ref{thm:proj_sufficient} says, in effect, that the beliefs on $\bB$ may be equivalently updated by the projection of the $\bY^i$ onto the $k$-dimensional column space of $\Phi_i$. For climate models, $q$ is generally very large. For many basis designs $k \ll q$, and when groups, $i$, index MIP simulations, $m$ is generally small.

\begin{theorem} \label{thm:proj_sufficient}
Let $\hat{\bbeta} = (\hat{\bbeta}^1, \dots, \hat{\bbeta}^m)$ with $\hat{\bbeta}^i = (\Phi_i^\T \Phi_i)^\dagger \Phi_i^\T \bY^i$. Then $\hat{\bbeta}$ is Bayes linear sufficient for $\bY$ for adjusting $\bB$ if the column space of projection matrix $\mP_i = \Phi_i(\Phi_i^\T \Phi_i)^\dagger \Phi_i^\T$, $C(\mP_i)$, is invariant over $i$, i.e., $C(\mP_i) = \mathcal{C}$, $\forall i$.
\end{theorem}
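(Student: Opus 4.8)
\section*{Proof strategy}

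The plan is to establish Bayes linear sufficiency directly from its definition. Because each $\hat{\beta}_i = (\Phi_i^\T\Phi_i)^\dagger\Phi_i^\T\bY_i$ is a linear function of $\bY$, the claim that $\hat{\bbeta}$ is Bayes linear sufficient for $\bY$ for adjusting $\bB$ (the property $[\bB \indep \bY]/\hat{\bbeta}$) is equivalent to $\cov[\bB - \mathbb{E}_{\hat\bbeta}[\bB],\,\bY] = \zero$; this gives $\mathbb{E}_\bY[\bB] = \mathbb{E}_{\hat\bbeta}[\bB]$, whence $\var_\bY[\bB] = \var_{\hat\bbeta}[\bB]$ via the resolved-variance identity $\var_D[\bB] = \var[\bB] - \var[\mathbb{E}_D[\bB]]$. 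So it suffices to exhibit $\bY$ as $\hat\bbeta$, up to an invertible reparametrisation, plus a residual uncorrelated with both $\bB$ and $\hat\bbeta$.

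First I would use the hypothesis to collapse the projectors: $\Phi_i(\Phi_i^\T\Phi_i)^\dagger\Phi_i^\T$ is the orthogonal projector onto $C(\Phi_i)$, which is unique, so $C(P_i) = C(\Phi_i) = \mathcal{C}$ for all $i$ forces $P_i = P$, a single projector with image $\mathcal{C}$. Setting $Q_i := (\Phi_i^\T\Phi_i)^\dagger\Phi_i^\T$, the Moore--Penrose identities give $P\Phi_i = \Phi_i$, $(\bI - P)\Phi_i = \zero$, $Q_iP = Q_i$, $Q_i(\bI - P) = \zero$ and $\ker Q_i = \mathcal{C}^\perp$. Substituting (\ref{eqn:exchangeability_phi}) block-wise then gives, for each $i$, \[ \bY_i = P\bY_i + (\bI - P)\bY_i = \Phi_i\hat{\beta}_i + (\bI - P)\,\mathcal{R}_i(\bY_i;\bX_i), \] in which $\Phi_i\hat{\beta}_i = P\bY_i$ is a linear image of $\hat{\beta}_i$, and conversely $\hat{\beta}_i = Q_i(\Phi_i\hat{\beta}_i)$ recovers $\hat{\beta}_i$. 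Hence $\mathrm{span}[1,\bY] = \mathrm{span}[1,\hat\bbeta,\{\bV_i\}_i]$ with $\bV_i := (\bI - P)\mathcal{R}_i(\bY_i;\bX_i)$, and it remains to control the $\bV_i$.

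That $\bV_i$ is uncorrelated with $\bB = (\beta_1,\dots,\beta_m,\mathcal{M}(\beta))$ is immediate from the belief specifications of Section~\ref{sec:exchangeable_processes}: $\cov[\mathcal{R}_i(\bY_i;\bX_i),\mathcal{M}(\beta)] = \zero$ and $\cov[\mathcal{R}_i(\bY_i;\bX_i),\mathcal{R}_{i'}(\beta)] = \zero$ yield $\cov[\mathcal{R}_i(\bY_i;\bX_i),\beta_{i'}] = \zero$ for all $i,i'$, so $\cov[\bV_i,\bB] = (\bI - P)\cov[\mathcal{R}_i(\bY_i;\bX_i),\bB] = \zero$. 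Granting moreover that $\cov[\hat\bbeta,\bV_i] = \zero$, the argument closes: writing $\bB - \mathbb{E}_{\hat\bbeta}[\bB] = (\bB - \mathbb{E}[\bB]) - \cov[\bB,\hat\bbeta]\var[\hat\bbeta]^\dagger(\hat\bbeta - \mathbb{E}[\hat\bbeta])$ and using the decomposition of $\bY_i$, we get $\cov[\bB - \mathbb{E}_{\hat\bbeta}[\bB],\bY_i] = \cov[\bB - \mathbb{E}_{\hat\bbeta}[\bB],\Phi_i\hat{\beta}_i] + \cov[\bB - \mathbb{E}_{\hat\bbeta}[\bB],\bV_i]$, where the first term vanishes since $\bB - \mathbb{E}_{\hat\bbeta}[\bB] \perp \mathrm{span}[1,\hat\bbeta]$ and the second equals $\cov[\bB,\bV_i] - \cov[\bB,\hat\bbeta]\var[\hat\bbeta]^\dagger\cov[\hat\bbeta,\bV_i] = \zero$.

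The hard part will be the remaining claim $\cov[\hat\bbeta,\bV_i] = \zero$. Using $\cov[\beta_i,\mathcal{R}_i(\bY_i;\bX_i)] = \zero$ and the cross-group uncorrelatedness of the $\mathcal{R}_i(\bY_i;\bX_i)$, one finds $\cov[\hat{\beta}_{i'},\bV_i] = \zero$ for $i' \neq i$ and $\cov[\hat{\beta}_i,\bV_i] = Q_i\var[\mathcal{R}_i(\bY_i;\bX_i)](\bI - P)$; since $\ker Q_i = \mathcal{C}^\perp = C(\bI - P)$, this vanishes precisely when $\var[\mathcal{R}_i(\bY_i;\bX_i)]$ leaves $\mathcal{C}^\perp$ invariant, i.e. commutes with $P$. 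This is the Zyskind/Gauss--Markov condition for ordinary least squares to coincide with the generalised least squares projection, and it is here --- not in the purely geometric steps --- that more than $C(P_i) = \mathcal{C}$ enters; I would discharge it from the residual-variance specification (a spherical within-group residual variance suffices and is the standard choice), making that assumption explicit if it is not already implicit in the model.
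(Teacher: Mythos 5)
Your proposal follows essentially the same route as the paper's proof: collapse the $P_i$ to a single orthogonal projector $P$ onto $\mathcal{C}$, split $\bY_i = P\bY_i + (\bI-P)\bY_i$ with $P\bY_i = \Phi_i\hat{\beta}_i$ and $(\bI-P)\bY_i = (\bI-P)\mathcal{R}_i(\bY_i;\bX_i)$, kill the covariances of the residual piece with $\bB$ using the stated belief specifications, and reduce sufficiency to orthogonality of the residual piece to $\hat{\bbeta}$ (the paper phrases this last requirement as $\mathbb{E}_{\hat{\beta}_i}[\bY_i] = \Phi_i\hat{\beta}_i$, which amounts to the same covariance identity). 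The substantive difference is what you do with that final step, and here your extra care is warranted rather than a defect. The paper disposes of it by asserting that any $A \in C(P)$ and $B \in N(P)$ have $\cov[A,B]=\zero$ because the subspaces are orthogonal; but pointwise membership of two random vectors in orthogonal subspaces controls only $\mathbb{E}[A^\T B]$ (the trace of the cross-moment), not the full cross-covariance matrix. The quantity actually required is $\cov[(\bI-P)\bY_i, P\bY_i] = (\bI-P)\var[\mathcal{R}_i(\bY_i;\bX_i)]P$, which is exactly the term you isolate via $\cov[\hat{\beta}_i,\bV_i] = Q_i\var[\mathcal{R}_i(\bY_i;\bX_i)](\bI-P)$, and it vanishes precisely when $\mathcal{C}$ is an invariant subspace of $\var[\mathcal{R}_i(\bY_i;\bX_i)]$ --- your Zyskind condition. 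A two-dimensional check with $\Phi = (1,0)^\T$ and correlated residual components confirms that without this condition the adjustment of $\bB$ by $\hat{\bbeta}$ genuinely differs from the adjustment by the full $\bY$, so the condition is not cosmetic.

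In short, your proof is correct as far as it goes and your diagnosis of the ``hard part'' is the real content: the hypothesis $C(P_i)=\mathcal{C}$ for all $i$ does not by itself deliver Bayes linear sufficiency, and an additional assumption on the within-group residual variance (spherical, or more generally commuting with $P$) must be made explicit; the paper's proof implicitly assumes it away in the step equating subspace orthogonality with zero covariance. It is worth noting that in the application of Section~\ref{sec:sic} the residual variance is specified as a heteroskedastic process, so the needed condition is not automatic there either. The only minor point to tighten in your write-up is to state explicitly that you are taking $\Phi_i$ of full column rank when you use $Q_i\Phi_i=\bI_k$ and $\ker Q_i=\mathcal{C}^\perp$ (the paper makes the analogous remark after the theorem), and to record the cross-group specification $\cov[\mathcal{R}_i(\bY_i;\bX_i),\mathcal{R}_{i'}(\bY_{i'};\bX_{i'})]=\zero$ for $i\neq i'$, which you use but which the main text leaves implicit.
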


\begin{proof}
Proof available in the supplementary material.
\end{proof}

Matrix $\Phi_i$ being of full rank is sufficient, but not necessary, in satisfying the condition $C(\mP_i) = \mathcal{C}$, $\forall i$ in Theorem~\ref{thm:proj_sufficient}, and most sensible basis designs will ensure this. Assuming $C(\mP_i) = \mathcal{C}$, $\forall i$, and appealing to the sufficiency of $\hat{\bbeta}$ for $\bY$ for adjusting $\bB$, we write the joint specification of the exchangeability judgements made in (\ref{eqn:hodges_reparam}) as
\begin{equation} \label{eqn:hierarchical_reparam}
\spacingset{1}
  \begin{bmatrix}
      \hat{\bbeta}^1 \\
      \vdots \\
      \hat{\bbeta}^m \\
      \zero_{km \times 1}
    \end{bmatrix} =
  \left[\begin{array}{@{}c|c@{}}
    \begin{matrix}
    \bI_{km}
    \end{matrix}
    & \zero_{km \times k} \\
  \hline
    -\bI_{km} &
    \begin{matrix}
    \bJ_{m \times 1} \otimes \bI_{k}
    \end{matrix}
  \end{array}\right]
  \begin{bmatrix}
    \bbeta^1 \\
    \vdots \\
    \bbeta^m \\
    \mathcal{M}_{\bbeta}
  \end{bmatrix} + 
  \begin{bmatrix}
    \mathcal{R}^1_{\hat{\bbeta}} \\
    \vdots \\
    \mathcal{R}^m_{\hat{\bbeta}} \\
    \mathcal{R}^1_{\bbeta} \\
    \vdots \\
    \mathcal{R}^m_{\bbeta}
  \end{bmatrix}
\end{equation}
\spacingset{\spacing}
where $\mathcal{R}^i_{\hat{\bbeta}} = (\Phi_i^\T \Phi_i)^\dagger \Phi_i^\T \mathcal{R}^i_\bY$. Calculation of $\mathbb{E}_{\hat{\bbeta}}[\bB]$ and $\text{var}_{\hat{\bbeta}}[\bB]$ follows (\ref{eqn:adj_exp}) and (\ref{eqn:adj_var}); the specific equations are given in the supplementary material. We also provide partial updating equations for $\mathbb{E}_{\hat{\bbeta}}[\mathcal{M}_{\bbeta}]$ and $\text{var}_{\hat{\bbeta}}[\mathcal{M}_{\bbeta}]$ which are calculated with computational order $\mathcal{O}(k^3 m^3)$ and may be used in equations (S4) and (S5) that update the full coexchangeable process model.

\spacingset{1}
\subsection{Repeated observations of a process} \label{sec:repeated_obs}
\spacingset{\spacing}

Assume, within each simulator output $i$, we observe a sequence $(\bX^i_1, \bY^i_1), (\bX^i_2, \bY^i_2), \dots$ of observation/covariate pairings; here, each $(\bX^i_t, \bY^i_t)$ is some $p$-dimensional sub-level process contained in $(\bX^i, \bY^i)$ and the $\bbeta^i$ are invariant to the dimension indexed by $t$. In the simple example provided above, each $(\bX^i_t, \bY^i_t)$ are pairings of scalar observations, and so $p=1$, and $\bbeta^i$ is invariant in time. Similarly, for our application, each $t$ indexes time, but the $\bX^i_t$ and $\bY^i_t$ are spatial observations within the spatio-temporal $\bX^i$ and $\bY^i$; though we could instead index space, both space and time, or some other feature of the process. The exchangeable process model implicitly requires an assumption of conditional exchangeability in (\ref{eqn:exchangeability_Y}), the second order equivalent of the exchangeability judgements used in Bayesian regression \citep[for discussion see][]{williamson2020emergent}. Specifically, conditional second order exchangeability implies that the $\bY^i_1, \bY^i_2, ...$ are second order exchangeable given some $\bX^i_t$ that is constant for all $t$. We may thus specify a mean process, $\mathbb{E}[\bY^i_t] = \mathcal{M}(\bX^i_t; \bbeta^i)$ for some known covariate $\bX^i_t$ and unknown parameter $\bbeta^i$; above, we assume the mean process to be $\mathbb{E}[\bY^i_t] = \phi(\bX^i_t) \bbeta^i$. According to the second order representation theorem,
\begin{equation} 
  \bY^{i}_t = \phi(\bX^i_t) \bbeta^i + \mathcal{R}^{i}_{\bY_t}
\end{equation} 
where $\cov[\bY^i_t, \bY^i_{t'}] = \cov[\mathcal{M}(\bX^i_t; \bbeta^i), \mathcal{M}(\bX^i_{t'}; \bbeta^i)] + \mathbf{1}_{\{t = t'\}} \var[\mathcal{R}^i_{\bY_t}]$ and we arrive at (\ref{eqn:exchangeability_Y}) by building the joint representation, that is, stacking the instances of $t$. In can be intuitive to think of the full simulator outputs $(\bX^i, \bY^i)$ as matrices with the variant and invariant dimensions (here, space and time) indexed across the rows and columns, respectively. The mathematics in Section~\ref{sec:exchangeable_processes} simply then require $\bX^i$ and $\bY^i$ to be substituted by their vectorised equivalents, $\mathrm{vec}(\bX^i)$ and $\mathrm{vec}(\bY^i)$.

\spacingset{1}
\subsection{Reality as a coexchangeable process}\label{sec:reality_process}
\spacingset{\spacing}

We now show how judgements of coexchangeability may be made to incorporate (\ref{eqn:hodges_reparam}) or (\ref{eqn:hierarchical_reparam}) into the coexchangeable model. Define the true process of $\bY^i$ that the simulators attempt to resolve as $\bY^*$. An assumption coexchangeability of $\bY^*$ and $\bY^i$ given fixed $\bX^i = \bX$ $\forall i$, and hence $\Phi_i = \Phi$ $\forall i$, is equivalent to assuming coexchangeability of $\bY^*$ and the $\bbeta^i$. Bayes linear sufficiency of $\mathcal{M}_{\bbeta}$ for the $\bY^i$ for adjusting $\bY^*$ follows. We write a model for $\bY^*$ such that
\begin{equation} \label{eqn:reality_process_Y}
\bY^* = \bA_{\bY} \mathcal{M}_{\bbeta} + \bU_{\bY},
\end{equation}
where $\bU_{\bY}$ is a model-mismatch term uncorrelated with the $\bY^i$ and the assumption of coexchangeability permits $\bA_{\bY}$ to be any matrix of suitable dimensions. The choice of $\bA_{\bY} = \phi(\bX^*)$ is obvious in our context but different choices are permissible should they make sense to the application. Finally, the data, $\bZ_\bY$ are modelled as
\begin{equation} \label{eqn:data_process_Y}
    \bZ_\bY = \bH_\bY \bY^* + \bW_\bY,
\end{equation}
for measurement error $\bW_\bY$, and known incidence matrix $\bH_\bY$.

Similar to \cite{rougier2013second}, we assume sufficiency of $\bY^*$ for $\bZ_\bY$, allowing us to update beliefs on $\bY^*$ in two stages. The assumption of sufficiency of $\bY^*$ for $\bZ_\bY$ is akin to an assumption of conditional independence 
between $\bZ_\bY$ and the $\bY^i$ in a probabilistic analysis. The two-stage update of beliefs on $\bY^*$ may thus be equivalently thought of as either the joint update by the $\bY^i$ and $\bZ_\bY$ or as sequential updates by the $\bY^i$ and then $\bZ_\bY$. As with the coexchangeable model in Section~\ref{sec:rougier}, we show an graphical representation of coexchangeable process model in Figure~\ref{fig:flowchart}b; the equations for the two-stage belief update are provided in the supplementary material.

\spacingset{1}

\begin{figure}[h!]
  \centering
  \includegraphics[width = 150mm]{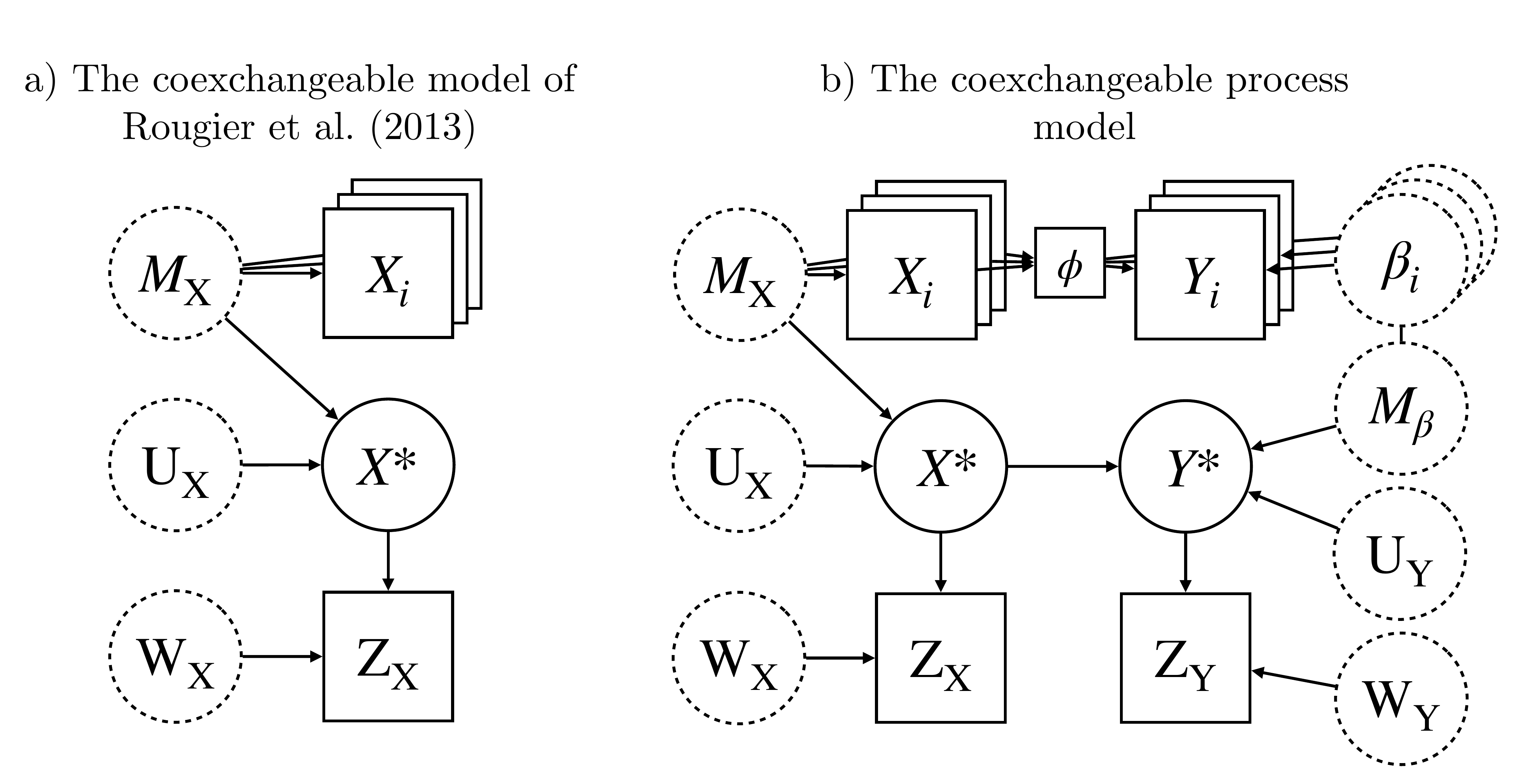}
  \caption{Graphical representation of the coexchangeable and coexchangeable process models. Boxes represent observed quantities, dashed circles represent unobserved quantities over which we make prior belief specifications, and solid circles represent unobserved quantities for which we calculate updated beliefs. Analogous to conditional independence in probabilistic models, arrows may be used to identify Bayes linear sufficiency between quantities. For simplicity, residual terms are omitted.}
  \label{fig:flowchart}
\end{figure}

\spacingset{1}

\section{Joint reconstruction of Paleo sea-surface temperature and sea-ice with PMIP3 and PMIP4} \label{sec:application}

\subsection{Paleodata}

\spacingset{\spacing}

Geological, ecological and geochemical measurements of the LGM have large associated uncertainties, and these uncertainties are further compounded by relating the measured processes into proxies for SST and SIC. We use judgements from subject matter experts to account for sampling bias and the errors in the proxy-data, which are considered to be systematic in space. In some cases we directly incorporate these judgements into the belief structure of the model, namely, expectations, variances, and covariances. In other cases, similar to \cite{rougier2022estimating}, we use `pseudo-observations' to reflect subject matter experts' judgements in sparsely observed areas.

SST data, $\bZ_{\bX}$, uses proxy measurements obtained from sea-bed sediment core samples recorded either as annual or summer means. The measurements are shown in Figure~\ref{fig:data_sst}; annual and summer means are depicted with points and triangles, respectively. There is very likely some strong observation bias in the foraminifera-based proxy measurements from the Arctic and Nordic seas, approximately north of $62^\circ$ N. In this region, very cold water and full sea-ice coverage are common; both inhospitable conditions for most foraminifera species. The non-existence of foraminifera in ocean sediments are rarely reported, since the absence of foraminifera leaves little to be analysed in ecological or geochemical studies. Thus, observations may be biased towards the warm climate events naturally present within the inter-decadal variability, when foraminifera are found. We account for this through the specification of the measurement error term in (\ref{eqn:Z}), $\bW_\bX$. Define $\mB_{\mN}$ as a set of indices that index observations from the Nordic Seas in $\bZ_\bX$. We set $\mathbb{E}[\bW_\bX]_{b} = 2$ and $\mathbb{E}[\bW_\bX]_{b^*} = 0$ for $b \in \mB_{\mN}$ and $b^* \notin \mB_{\mN}$. As the bias originates from a systematic reporting error we believe the measurement errors to be correlated. We partition $\var[\bW_\bX] = \mV_\bD(\bI + \mV_\bB)\mV_\bD$ and set $\mV_\bD$ as the diagonal matrix of the reported standard deviations in the MARGO dataset, and $\mV_{\bB[b,b']} = 1 \; \forall b, b' \in \mB_{\mN}$ where $b \neq b'$, and $0$ otherwise. This represents the weakest possible belief specification as it makes the measurement error in the Nordic seas perfectly correlated, in essence reducing the information into a single observation. Without accounting for the bias in these observations, the Nordic Seas would be too warm to support sea-ice, which is known from geological records not to be the case. The specification of $\mathbb{E}[\bW_\bX]_b = 2$ comes from subject matter experts; even still, due to the imposed correlation structure the analysis is not sensitive to this specification. Finally, we set incidence matrix in (\ref{eqn:Z}), $\bH_\bX = \bH_\bX^{\mT} \otimes \bH_\bX^{\mS}$, where $\bH_\bX^{\mT}$ calculates either the annual or summer means of $\bX^*$ as necessary, and $\bH_\bX^{\mS}$ spatially interpolates these averages from simulator's spatial grid to the data locations.

\spacingset{1}

\begin{figure}[h!]
  \centering
  \includegraphics[width = 140mm]{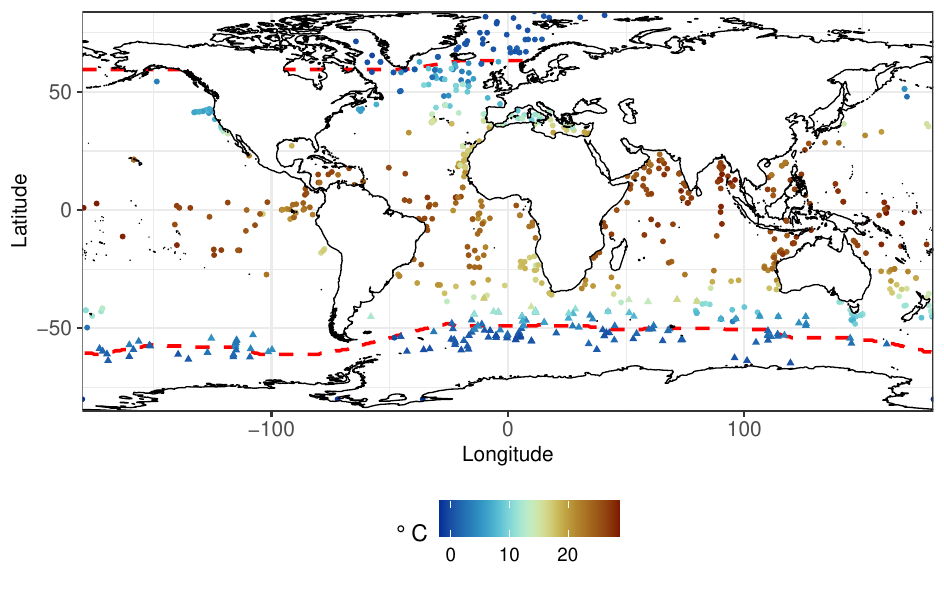}
  \caption{Proxy-based measurements of LGM SST and maximum sea-ice extents. Annual and summer mean SST's are represented by points and triangles, respectively. Sea-ice extents are represented by a red dashed line. The Southern sea-ice extent is as reported in \cite{gersonde2005sea} and the Northern sea-ice extent is provided by coauthors.}
  \label{fig:data_sst}
\end{figure}

\spacingset{\spacing}

Point-wise proxy measurements of SIC are difficult to interpret and unreliable. More robust are estimates of maximum sea-ice extent. The Northern and Southern extents used for $\bZ_{\bY}$ are shown in Figure~\ref{fig:data_sst} by the red dashed lines. $\bZ_{\bY}$ is a $4145$-dimensional vector, i.e. the same spatial resolution as the numerical models, that records a $1$ at spatial locations within the extent boundaries, and a $0$ outside. Incidence matrix, $\bH_\bY$ in (\ref{eqn:data_process_Y}) is a $p \times q$ matrix that collates, from $\bY^*$, the February SIC from the Northern hemisphere and August SIC from the Southern hemisphere. We specify measurement error, $\bW_\bY$, to be spatially correlated and certain of SIC estimates in the poles, where we are confident that there is full sea-ice coverage, and mid-latitude and equatorial regions, where we are confident there is zero sea-ice. We set $\var[\bW_\bY] = K \text{cor}[\bW_\bY] K$ where $K$ is a diagonal matrix that represents our marginal uncertainty
and $\text{cor}[\bW_\bY]$ is spatially correlated error; descriptions of $K$ and $\text{cor}[\bW_\bY]$ are given in the supplementary material.

\spacingset{1}
\subsection{Fitting the coexchangeable process model} \label{sec:model}
\spacingset{\spacing}

To fit the coexchangeable process model we first model SST via the coexchangeable model described in Section~\ref{sec:rougier} and the process of SIC given SST is modelled using the methodology developed in Section~\ref{sec:joint}. As described above, we build a MME by selecting a single representative simulation from each of the PMIP3 and PMIP4 modelling groups, with the exception of the HadCM3 model simulations where we make use of all three available PMIP4 simulations that use different ice sheet boundary conditions. The $m=13$ models selected are 
\begin{equation}
    \begin{split}
        \mathcal{S} = \{&\texttt{CNRM-3, IPSL-3, MIROC-3, MPI-3, CCSM4-3, GISS-3, MRI-3, AWI-4,} \\
        &\texttt{MIROC-4, MPI-4, HadCM3-PMIP, HadCM3-Glac1D, HadCM3-Ice6G}\}.
    \end{split}
\end{equation}
The assumption of coexchangeability is a prior judgement, and at the time of the analysis each of these models was deemed coexchangeable by the project's SMEs. Each ensemble member was projected onto the FAMOUS ocean grid with $4145$ spatial locations, and we use the 12 monthly means of SST and SIC.

\spacingset{1}
\subsubsection{Sea-surface temperature} \label{sec:sst}
\spacingset{\spacing}

Following notation in Section~\ref{sec:rougier}, define the ensemble SSTs as $\mathbb{X} = \{\bX^1, \dots, \bX^m\}$, and assume second order exchangeability within the ensemble. This leads to the representation in equation (\ref{eqn:X}), for which we require prior specification of $\var[\mathcal{M}_{\bX}]$ and $\var[\mathcal{R}^i_{\bX}]$. Coexchangeabilty of $\bX^i$ and $\bX^*$ leads to (\ref{eqn:Y}), for which we require prior specification of model mismatch terms $\mathbb{E}[\bU_\bX]$, $\var[\bU_\bX]$ and incidence matrix $\bA$. We specify $\var[\mathcal{R}_{\bX}] = \alpha^2 \var[\mathcal{M}_{\bX}]$ so that $\var[\bX] =  (1 + \alpha^2) \var[\mathcal{M}_{\bX}]$ and, following \cite{rougier2013second}, set $\mathbb{E}[\bU_\bX] = \zero$ and $\bA = \bI$.

For most climate fields, including SST, we can exploit spatio-temporal structure so that computation of adjusted beliefs is scalable. The most obvious way to do this is to assume separability through space and time so that $\var[\bX]$ and $\var[\bX^*]$, and hence $\var[\bU_\bX]$, have a Kronecker structure. For example, $\var[\bX] = \var[\bX_{\mT}] \otimes \var[\bX_{\mS}]$ where $\mT$ denotes time and $\mS$ denotes space. If we similarly partition $\var[\bU_\bX] = \var[\bU_{\bX_{\mT}}] \otimes \var[\bU_{\bX_{\mS}}]$ and equate either $\var[\bX_{\mT}] = \var[\bU_{\bX_{\mT}}]$ or $\var[\bX_{\mS}] = \var[\bU_{\bX_{\mS}}]$ computation of our adjusted beliefs is efficient. Note that this assumption is weaker than the application in \cite{rougier2013second} where it is assumed that $\kappa^2 \var[\bX] = \var[\bU]$. Here, we set $\var[\bX_{\mT}] = \var[\bU_{\bX_{\mT}}] = \bJ_{n \times n}$ as our subject matter experts believe that discrepancies between the $\bX^i$ and reality are predominantly spatial and constant in time. 

We set $\var[\bX_{\mS}]$ as the positive semi-definite matrix that minimises the distance between $\var[\bX]$ and the empirical covariance matrix of $\mathbb{X}$, $\bS_{\bX}$, under the Frobenius norm. This specification follows similar arguments to \cite{rougier2013second} where $\var[\bX] = \bS_{\bX}$, but preserves Kronecker structure in $\var[\bX]$ to allow for efficient computation. Details of this calculation are in the supplementary material. Elements of $\var[\bU_{\bX_{\mS}}]$ are defined via the $C^4$-Wendland covariance function such that
\begin{equation} \label{eqn:wendland}
    \var[\bU_{\bX_{\mS}}]_{[s,s']} = \kappa^2 \left(1 + \tau \frac{d(s,s')}{c} + \frac{\tau^2 - 1}{3} \frac{d(s,s')^2}{c^2}\right)\left(1 - \frac{d(s,s')}{c}\right)^\tau_+
\end{equation}
where $\tau \geq 6$, $c \in (0, \pi]$, $(a)_+ = \text{max}(0, a)$, and $d(i,j)$ is the geodesic distance between locations $i$ and $j$. The $C^4$-Wendland covariance function is commonly chosen so as to define a smooth process on the sphere; \citep[see, for example][]{astfalck2019emulation}. Parameters are specified as $\kappa = 1.61$, $c = 0.92$, and $\tau = 6$; these values are selected to represent the subject matter experts' beliefs as to the magnitude and correlation lengths
of $\bU_\bX$. A sensitivity analysis is provided in the supplementary material to highlight how these judgments influence inference.

The 2-stage Bayes linear update follows \cite{rougier2013second}. As with \cite{rougier2013second}, we assume the first update $\mathbb{E}_{\bar{\bX}}[\bX^*]$ is well approximated by $\mathbb{E}_{\bar{\bX}}[\bX^*] \approx \bar{\bX}$ and we calculate $\var_{\bar{\bX}}[\bX^*] = \left(\frac{\alpha^2}{m + \alpha^2}\right)\var[\mathcal{M}_{\bX}] + \var[\bU_\bX]$ so $\var_{\bar{\bX}}[\bX^*] \rightarrow \var[\bU]$ as $m \rightarrow \infty$. Here, we choose $\alpha^2 = 1$ and have $m = 13$ and so do not assume $\var_{\bar{\bX}}[\bX^*] \approx \var[\bU]$. From specifications $\var[\bU_{\bX_{\mT}}] = \var[\bX_{\mT}]$ and $\var[\bX] = 2 \var[\mathcal{M}_{\bX}]$, $\var_{\bar{\bX}}[\bX^*]$ has Kronecker structure so that $\var_{\bar{\bX}}[\bX^*] = \var_{\bar{\bX}}[\bX^*_{\mT}] \otimes \var_{\bar{\bX}}[\bX^*_{\mS}]$, where $\var_{\bar{\bX}}[\bX^*_{\mT}] = \bJ_{n \times n}$ and $\var_{\bar{\bX}}[\bX^*_{\mS}] = \frac{\var[\bX_{\mS}]}{2(m + 1)}  + \var[\bU_{\bX_{\mS}}]$. The above specifications lead to a prior predictive for $\bX^*$ 
that is warmer than our true beliefs at the poles where we are certain there was full sea ice coverage and so the SST must be $-1.92^{\degree}$C. This is problematic due to the data sparsity in the poles, and so we correct our prior by adding 10 equally longitudinally spaced  pseudo-observations at 80$^{\degree}$N and 80$^{\degree}$S, each of $-1.92^{\degree}$C. Figures~\ref{fig:sst_update}a and \ref{fig:sst_update}b plot $\mathbb{E}_{\bar{\bX}}[\bX^*]$ and marginal standard deviation of $\var_{\bar{\bX}}[\bX^*]$, respectively, for January. 

The second update is calculated by 
\begin{linenomath}
\begin{align}
  \mathbb{E}_{\bar{\bX}, \bZ_\bX}[\bX^*] &= \bar{\bX} + \var_{\bar{\bX}}[\bX^*] \bH_\bX^\T \var_{\bar{\bX}}[\bZ_{\bX}]^{-1}(\bZ_\bX - \bH_\bX \bar{\bX} - \mathbb{E}[\bW_\bX]) \text{, and}  \label{eqn:exp} \\
  \var_{\bar{\bX}, \bZ_\bX}[\bX^*] &= \bJ_{n \times n} \otimes \left( \var_{\bar{\bX}}[\bX^*_{\mS}] - \var_{\bar{\bX}}[\bX^*_{\mS}]\left(\bH_\bX^{\mS}\right)^\T \var_{\bar{\bX}}[\bZ_{\bX}]^{-1} \bH_\bX^{\mS} \var_{\bar{\bX}}[\bX^*_{\mS}] \right), \label{eqn:var}
\end{align}
\end{linenomath}
where $\var_{\bar{\bX}}[\bZ_{\bX}] = \bH_\bX \var_{\bar{\bX}}[\bX^*] \bH_\bX^\T + \var[\bW_\bX]$. Figures~\ref{fig:sst_update}c and \ref{fig:sst_update}d show these updates, also for January, respectively. Figures~\ref{fig:sst_update}e and ~\ref{fig:sst_update}f give indication of what information is gained from the data: Figure~\ref{fig:sst_update}e shows the difference the ensemble mean and $\mathbb{E}_{\bar{\bX}, \bZ_\bX}[\bX^*]$; Figure~\ref{fig:sst_update}f shows the empirical standard deviation of the ensemble, which when compared to Figure~\ref{fig:sst_update}d is shown to be more uncertain. Regions of cooling are apparent westward of large continental masses, i.e. on the North American Pacific Coast, or the African Atlantic Coast. This is indicative of models not capturing up-welling phenomena, and is a pattern previously observed in the error between models and pre-industrial simulations \citep{eyring2019taking}, as well as the cooling in the Southern Ocean, and warming the Indian and Pacific oceans. Larger uncertainty is seen in the Pacific where measurements are sparse.

\spacingset{1}

\begin{figure}[h!]
  \centering
  \includegraphics[width = 140mm]{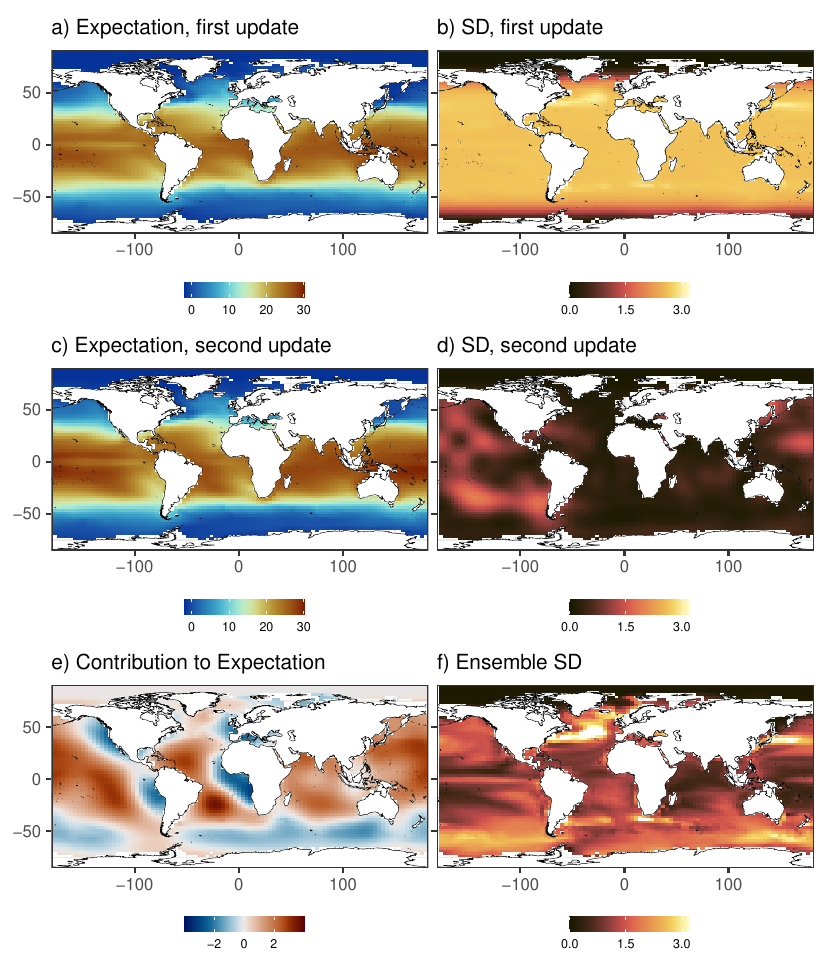}
  \caption{Adjusted beliefs of January SST: (a) expectation of SST adjusted by $\bar{\bX}$, equal to the ensemble mean; (b) marginal standard deviation of SST adjusted by $\bar{\bX}$; (c) expectation of SST adjusted by $\bar{\bX}$ and $\bZ_\bX$; (d) marginal standard deviation of SST adjusted by $\bar{\bX}$ and $\bZ_\bX$; (e) contribution of the data to our expected beliefs of SST $\mathbb{E}_{\bar{\bX}, \bZ_\bX}[\bX^*] - \bar{\bX}$; and (f), marginal standard deviation of the ensemble. All plots are shown in degrees Celsius.}
  \label{fig:sst_update}
\end{figure}

\spacingset{\spacing}

\spacingset{1}
\subsubsection{Sea-ice concentration given sea-surface temperature} \label{sec:sic}
\spacingset{\spacing}

Following notation in Section~\ref{sec:joint} we now consider the process of SIC given SST. Define the jointly-observed ensemble of SST and SIC as $(\mathbb{X}, \mathbb{Y}) = \{(\bX^1, \bY^1), \dots, (\bX^m, \bY^m)\}$ where each $\bY^i$ is a spatio-temporal vector of SIC that we model dependent on SST, $\bX^i$. The $\bY^i$ comprise 12 (i.e. monthly) $4145$-dimensional conditionally second order exchangeable spatial processes, $\bY^i_t$, where $i$ indexes the ensemble member and $t$ indexes time. We assume $\mathbb{E}[\bY^i_t] = \phi(\bX^i_t) \bbeta^{i}$ which leads to the representation in (\ref{eqn:exchangeability_Y}), for which we require specifications of $\phi(\cdot)$, $\mathbb{E}[\bbeta^i]$, $\var[\bbeta^i]$, and $\var[\mathcal{R}^i_{\bY_t}]$. We assume the $\bbeta^i$ to be second order exchangeable which leads to the representation in (\ref{eqn:exchangeability_beta}) that requires specification of $\mathbb{E}[\mathcal{M}_{\bbeta}]$, $\var[\mathcal{M}_{\bbeta}]$ and $\var[\mathcal{R}_{\bbeta}]$. Together, these exchangeability judgements lead to the representation for the $\bY^i$ in (\ref{eqn:joint1}), and equivalently (\ref{eqn:hodges_reparam}). 

For modelling SIC given SST 
we require spatial variation in the physics of the process. For example, the relationship between SIC and SST is different in the Nordic seas where sea-ice is supported at warmer SSTs than other locations. To account for this, we specify $\phi(\bX^i_t) = \Psi(\bX^i_t) \bm{\Theta}^i$ where $\Psi(\bY^i_t)$ models the behaviour of SIC and SST at individual locations using spline bases, and $\bm{\Theta}^i$ is a fixed-rank spatial basis of the spline coefficients. Note that we can model the spline coefficients individually in space, in which case $\phi(\bX^i_t) = \Psi(\bX^i_t)$, but as the size of $\bbeta^i$ then scales with the spatial resolution; this is not feasible for climate models. Here we specify $\Psi(\bX^i_t)$ with I-spline bases, a basis family commonly used for monotone functions \citep{ramsay1988monotone}. To ease computation, we project the spatial basis $\bm{\Theta}^i$ onto a principal component design calculated using a projection of the $\bY^i$ onto the column space of the $\Psi(\bX^i)$, $\hat{\Theta}^i$. Approximating the spatial coefficients using principal components restricts inference for the $\bbeta^i$ and $\mathcal{M}_{\bbeta}$ to linear combinations of $\hat{\Theta}^i$; we argue this is appropriate for modelling the mean MME process. We use a more flexible parametrisation in the model discrepancy below so that inference on $\bY^*$ given the SIC data is not restricted to linear combinations of the principal component design. Full specification of $\Psi(\bX^i_t)$ and the $\bm{\Theta}^i$ are in the supplementary material. All $\Phi_i = \phi(\bX^i)$ are full rank and so the conditions of Theorem~\ref{thm:proj_sufficient} are met and $\hat{\bbeta} = (\hat{\bbeta}^1, \dots, \hat{\bbeta}^m)$ is sufficient for $(\mathbb{X}, \mathbb{Y})$ for updating $\bB = (\bbeta^1, \dots, \bbeta^m, \mathcal{M}_{\bbeta})$. We specify $\var[\mathcal{R}^i_{\bY_t}]$ as a heteroskedastic error process; smaller variance is specified for very cold SSTs, where we are confident there is sea-ice, and warm SSTs, where we are confident there is no sea-ice, and larger variance is specified for SSTs approximately between $-1^{\degree}$C and $3^{\degree}$C where sea-ice behaviour is variable. As above we assume $\mathbb{E}[\mathcal{M}_{\bbeta}]$ is well approximated by the empirical mean of the MME members so that $\mathbb{E}[\mathcal{M}_{\bbeta}] = \frac{1}{m} \sum_{i=1}^m \hat{\bbeta}^i$; further, similar to \cite{rougier2013second}, we set $\var[\bbeta^i] = \lambda$ where $\lambda$ is a diagonal matrix of the eigenvalues calculated from the principal component decomposition described above. As before we set $\var[\mathcal{R}_{\bbeta}] = \var[\mathcal{M}_{\bbeta}]$ and so $\var[\mathcal{R}_{\bbeta}] = \var[\mathcal{M}_{\bbeta}] = \lambda/2$.

Adjusted beliefs $\mathbb{E}_{\hat{\bbeta}}[\bB]$ and $\var_{\hat{\bbeta}}[\bB]$ 
are calculated as in the supplementary material. As mentioned in Section~\ref{sec:exchangeable_processes} an advantage of specifying our model jointly over the $\bbeta^i$ and $\mathcal{M}_{\bbeta}$ is that we obtain inference for each $\bbeta^i$ \citep[as opposed to only $\mathcal{M}_{\bbeta}$ as in][]{goldstein1998adjusting}. We look at the individual fits in Figure~\ref{fig:betai_update} where we plot, at four locations, the fit ensemble members prior to the coexchangeable adjustment. Location A shows a point in the Arctic where the models largely agree. Location B shows a region that occasionally sees small concentrations of sea-ice in the models but is predominantly too warm to support full sea-ice coverage. Location C shows three MME members whose relationship of SIC given SST disagree with the rest of the ensemble.
Finally, location D shows sea-ice behaviour at the Antarctic sea-ice edge. Given the SSTs that are observed, the differing physics predominantly lead to differences in Winter sea-ice.

\spacingset{1}

\begin{figure}[h!]
\centering
 \begin{subfigure}[b]{\textwidth}
     \centering
     \includegraphics[width = 110mm]{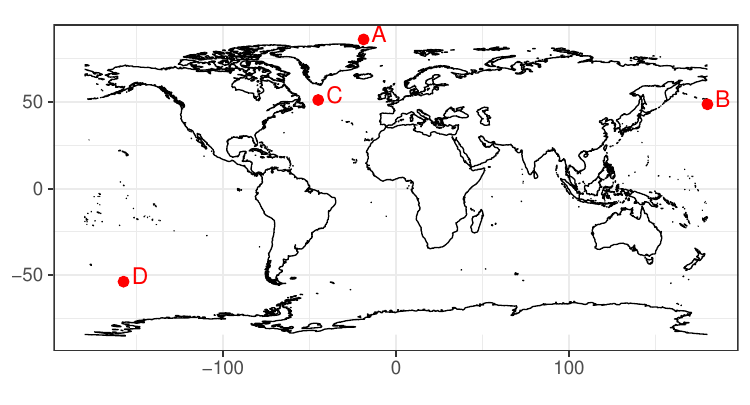}
 \end{subfigure}
 \begin{subfigure}[b]{\textwidth}
     \centering
     \includegraphics[width = 160mm]{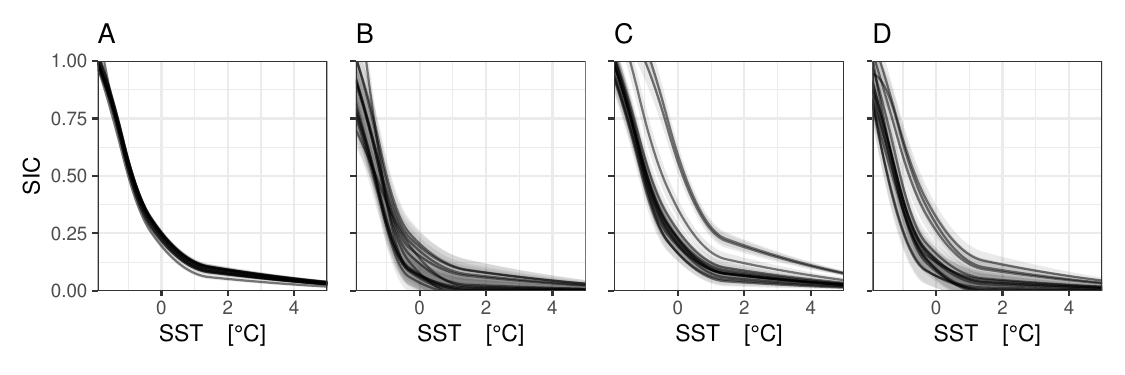}
 \end{subfigure}
 \caption{Spline fits, parameterised by $\bbeta^i$, of SIC given SST to each ensemble member prior to the coexchangeable adjustment. Shaded regions denote $\pm 2$ standard deviations.}
 \label{fig:betai_update}
\end{figure}

\spacingset{\spacing}

Coexchangeability of $\bY^*$ and $\bbeta^i$ leads to (\ref{eqn:reality_process_Y}) for which specification of $A(\bX^*)$, $\mathbb{E}[\bU_\bY]$ and $\var[\bU_\bY]$ is required. We set $A(\bX^*) = \phi(\bX^*)$ and model discrepancy so that $\bU_\bY = \Psi(\bX^*) \bU_\Theta$, and thus $\mathbb{E}[\bU_\bY] = \Psi(\bX^*) \mathbb{E}[\bU_\Theta]$ and $\var[\bU_\bY] = \Psi(\bX^*) \var[\bU_\Theta] \Psi(\bX^*)^\T$. The matrix $\bU_\Theta$ represents the discrepancy of the spatial spline coefficients for the reality model. Should our MME contain models with high spatial resolution we could specify $\bU_\Theta$ as a lower-rank process \citep[e.g. with fixed-rank methods such as][]{cressie2008fixed}; we do not find the need to do so here. We set $\mathbb{E}[\bU_\Theta] = 0$ and $\var[\bU_\Theta] = \bI_{l \times l} \otimes \var[\bU_{\bY_{\mS}}]$ where $l$ is the number of spline coefficients at each location and $\var[\bU_{\bY_{\mS}}]$ is calculated by the $C^4$-Wendland function in (\ref{eqn:wendland}) with parameters $\kappa = 0.3$, $c = 4$, and $\tau = 6$; again, we include a sensitivity analysis in the supplementary material to examine the sensitivity of our inference to differing parameterisations. The first update $\mathbb{E}_{\hat{\bbeta}}[\bY^*]$ and $\var_{\hat{\bbeta}}[\bY^*]$ is calculated given $\bX^*$ using $\mathbb{E}_{\hat{\bbeta}}[\mathcal{M}_{\bbeta}]$ and $\var_{\hat{\bbeta}}[\mathcal{M}_{\bbeta}]$ as previously calculated. The second update $\mathbb{E}_{\hat{\bbeta}, \bZ_\bY}[\bY^*]$ and $\var_{\hat{\bbeta}, \bZ_\bY}[\bY^*]$ are calculated given $\bX^*$ as in (S6) and (S7). Expectations and marginal standard deviations of both updates, calculated with $\bX^* = \mathbb{E}_{\bar{\bX}, \bZ_\bX}[\bX^*]$, are shown in Figure~\ref{fig:ice}. The expected SSTs do not produce adequate sea-ice when only considering learnt relations from the MME (Figure~\ref{fig:ice}a) and marginal standard deviation of SIC is large in regions where the expected SST is cold enough to guarantee sea-ice coverage (Figure~\ref{fig:ice}b). Updating SIC using the data leads to more extensive sea-ice cover (Figure~\ref{fig:ice}c) and a reduction of standard deviation everywhere except the sea-ice boundary (Figure~\ref{fig:ice}d). We also show, in Figures~\ref{fig:ice}e and \ref{fig:ice}f the empirical ensemble mean and standard deviation, respectively. Our final reconstruction produces more sea-ice in the Southern Hemisphere, but more crucially, drastically reduces the sea-ice uncertainty in the sea-ice interior. Similar behaviour is seen in the Northern Hemisphere winter.

\spacingset{1}

\begin{figure}[h!]
  \centering
  \includegraphics[width = 140mm]{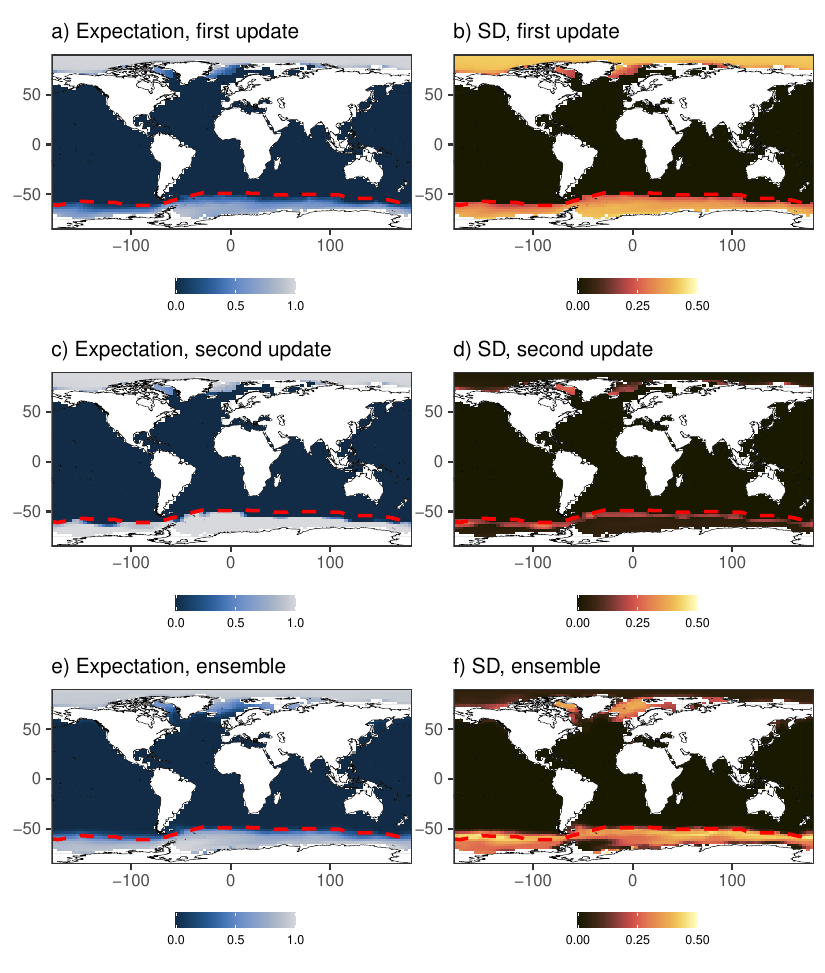}
  \caption{Adjusted beliefs of August SIC given expected SST, $\bX^* = \mathbb{E}_{\bar{\bX}, \bZ_\bX}[\bX^*]$: (a) expectation of SIC adjusted by $\hat{\bbeta}$; (b) marginal standard deviation of SIC adjusted by $\hat{\bbeta}$; (c) expectation of SIC adjusted by $\hat{\bbeta}$ and $\bZ_\bY$; (d) marginal standard deviation of SIC adjusted by $\hat{\bbeta}$ and $\bZ_\bY$; (e) ensemble mean; and (f), marginal standard deviation of the ensemble. Values are of sea-ice concentration measured between 0 and 1.}
  \label{fig:ice}
\end{figure}

\spacingset{\spacing}

\spacingset{1} 
\section{Sampled boundary conditions and their influence on glacial ice sheet modelling outputs} \label{sec:influence}
\spacingset{\spacing}

The remit of this work was to reconstruct, with uncertainty, joint SST and SIC fields to act as boundary conditions into FAMOUS-Ice \citep{smith2021famous}, an `atmosphere only' global climate model coupled to a ice sheet model. To examine the effects that the reconstruction has on the atmosphere-ice sheet model outputs we run a small ensemble varying only the SST and SIC boundary conditions. Bayes linear analysis updates our beliefs of the first two moments of $\bX^*$ and $\bY^*$. Define the Cholesky decompositions $\bL_{\bX}$ and $\bL_{\bY}$, so that $\bL_{\bX} \bL_{\bX}^\T = \var_{\bar{\bX}, \bZ_\bX}[\bX^*]$ and $\bL_{\bY} \bL_{\bY}^\T = \var_{\hat{\bbeta}, \bZ_\bY}[\bY^*]$. We may probabilistically sample 
  \begin{linenomath}
  \begin{align}
      \tilde{\bX}^* &\sim \mathbb{E}_{\bar{\bX}, \bZ_\bX}[\bX^*] + \bL_{\bX} \bZ \label{eqn:cheb1} \\
      \tilde{\bY}^* &\sim \mathbb{E}_{\hat{\bbeta}, \bZ_\bY}[\bY^*] + \bL_{\bY} \bZ \label{eqn:cheb2}
  \end{align} 
  \end{linenomath}
where $\bZ$ is a vector of independent random variables $\mZ_i$ with $\mathbb{E}[\mZ_i] = 0$ and $\var[\mZ_i] = 1$. Note, assigning a distributional form to the $\mZ_i$ is a further choice; for example, we could assign a Gaussian, uniform, or any other distribution that would make sense given the context. Similar to how ensemble design is considered in history matching \citep[e.g.][]{salter2019uncertainty}, we may eschew such distribution assumptions and set the bounds of the $\mZ_i$ with an appeal to Chebyshev's inequality. As is standard in the history matching literature, we set the concentration parameter $k = 3$ and sample $\mZ_i \sim \mathcal{U}(-k,k)$; we call this the \textit{plausible} set, and stress that it is not a probabilistic design, rather, a bounding notion of plausibility. To generate a joint sample ($\tilde{\bX}^*$, $\tilde{\bY}^*$), we first sample $\tilde{\bX}^*$ and then $\tilde{\bY}^*$. The structural dependencies between the sampled $\tilde{\bX}^*$ and $\tilde{\bY}^*$ are captured by our updated beliefs on $\mathcal{M}_{\bbeta}$ and $\bU_\Theta$. An example of a joint sample $(\tilde{\bX}^*, \tilde{\bY}^*)$ for the months of February and August is given in Figure~\ref{fig:hm}.

\spacingset{1} 

\begin{figure}[h!]
  \centering
  \includegraphics[width = 140mm]{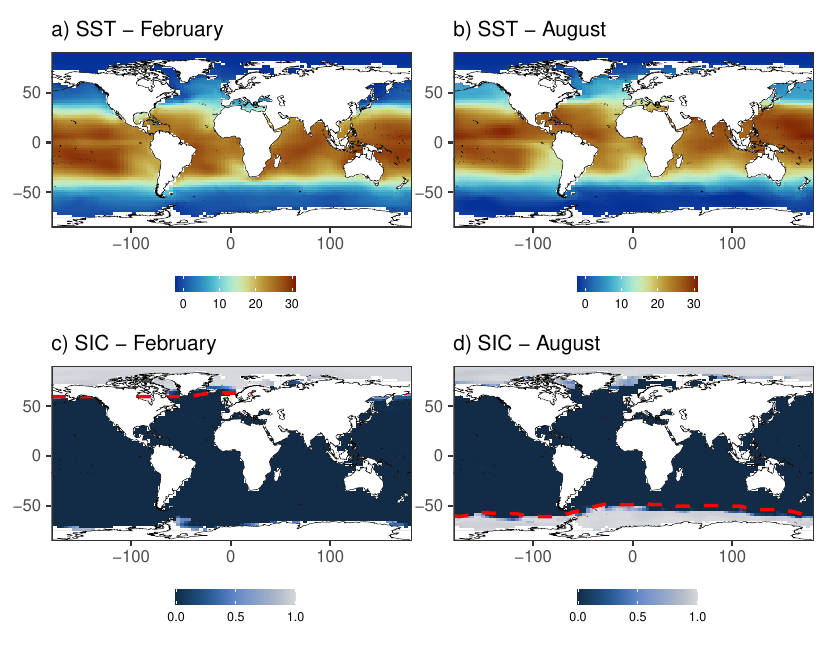}
  \caption{A single plausible sample from the SST and SIC reconstructions. Figures~\ref{fig:hm}a and \ref{fig:hm}b show SSTs from February and August, respectively and Figures~\ref{fig:hm}c and \ref{fig:hm}d show SICs from February and August, respectively.}
  \label{fig:hm}
\end{figure}

\spacingset{\spacing}

We generate an ensemble of 25 runs comprising a single reference run utilising the mean SST and SIC fields produced as part of the PMIP4 LGM experiments, and 24 randomly generated plausible samples of SST and SIC. Averaging over PMIP models to act as boundary conditions in ice-sheet modelling is commonplace \citep[e.g.][]{kageyama2017pmip4}. Crucially, it should be noted that the reference run and each of the PMIP runs $(\mathbb{X}, \mathbb{Y})$ do not lie within the calculated plausibility bounds. To examine the impact of running with plausible boundary conditions, we compare reference and plausible ice sheet heights at four geographically distinct locations: Arctic Canada, Central Greenland, Hudson Bay, and the Pacific coast; this is shown in Figure~\ref{fig:glacier_runs}. The simulator is run for 5000 years, with the ice sheet initialised with the LGM Glac-1D reconstruction \citep{tarasov2012data}. Beyond the SST and SIC fields, no other model parameters were changed and the model set-up was based on previous simulations of the Greenland ice sheet \citep{gregory2020large}.

\spacingset{1}

\begin{figure}[h!]
\centering
 \begin{subfigure}[b]{\textwidth}
     \centering
     \includegraphics[width = 120mm]{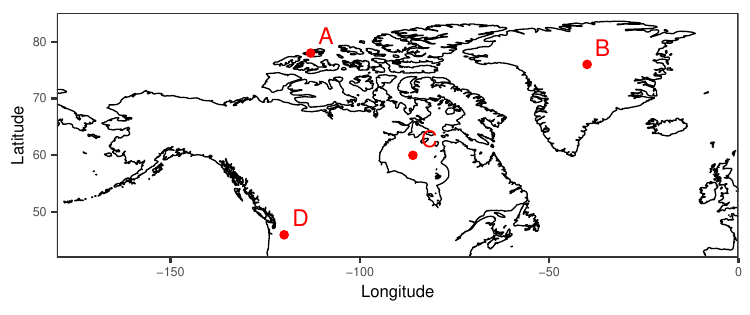}
 \end{subfigure}
 \begin{subfigure}[b]{\textwidth}
     \centering
     \includegraphics[width = 160mm]{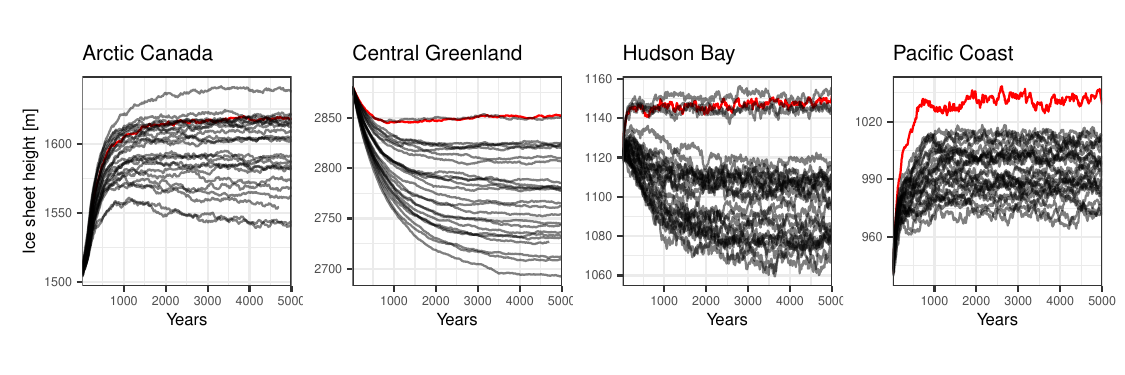}
 \end{subfigure}
 \caption{Ensemble time-series plots at four spatial locations: Arctic Canada (A), Central Greenland (B), Hudson Bay (C), and the Pacific Coast (D). The reference run is shown in red, the simulations forced with sampled boundary conditions in black.}
 \label{fig:glacier_runs}
\end{figure}

\spacingset{\spacing}

Figure \ref{fig:glacier_runs} shows that the boundary conditions have a strong influence on the simulated ice sheet. SST and SIC can affect ice sheets either through changing the evaporation over the ocean that transforms into snow falling on the ice sheet, or by warming/cooling the regions close to the oceans thus affecting the surface melt rate. We find that the primary differences in ice sheet size are due to changes in snow accumulation. Our ensemble mainly produces lower ice elevation than the reference run. Examination of the individual ensemble members revealed that this was due to the cooler Eastern Pacific and Western Atlantic boundary conditions reducing evaporation over these oceans causing a reduction in the accumulation of snow onto the ice sheet. The difference between the reference run and the samples is most pronounced at the Pacific coast. This matches our expectation that ocean-proximal sites are more sensitive to marine influence of the SST and SIC than more continental sites. This is also where we see the strongest reduction in SST in our samples compared to the PMIP model simulations. Indeed, the second update causes a strong Pacific cooling along the coast of North America (Figure 2e) relative to what we expect based on the PMIP models. This is a region where models tend to underestimate the upwelling of cold waters from the deep ocean.

Such biases are common in climate models, but are particularly problematic for coupled climate-ice sheet models where the strong feedbacks between climate and ice sheets amplify the effects of climate biases, which can lead to runaway ice sheets (amplified growth or decay) and unrealistic geometries. The ensemble shows a substantial spread of ice elevation (5-10\% of height) caused by the variance in reconstructed SST and SIC, highlighting the importance for considering this source of uncertainty for modelling past ice sheets. Our results show that correcting for biases and incorporating uncertainty in surface ocean conditions has a substantial effect on the simulated ice sheet, which in turn influences the internal dynamics of the ice sheet and its vulnerability to collapse or propensity to grow. The ice sheet geometry itself has direct impact on atmospheric circulation and an indirect influence on ocean circulation from runoff, thus directly impacting global heat distributions and surface climate conditions. Our methodology provides a way to reduce climate biases by prescribing ocean surface conditions compatible with observations, while at the same time exploring the effects of this source of uncertainty on other parts of the earth system. 

\spacingset{1}
\section{Discussion} \label{sec:conclusion}
\spacingset{\spacing}

By exploiting natural conditional exchangeability judgements we develop theory for the coexchangeable process model, as an extension to \cite{rougier2013second}, that combines multi-model ensembles and data to model correlated spatio-temporal processes. We provide results for efficient and scalable inference that may be used for large spatio-temporal problems where probabilistic Bayesian methods are often computationally infeasible \citep[see, for example, ][]{sansom2021constraining}. Our methodology requires fewer assumptions and less onerous belief specifications than that required by a probabilistic Bayesian analysis. To achieve these advancements we develop a Bayes linear analogue to a hierarchical Bayesian model. By combining exchangeability judgements and using the reparameterisation of \cite{hodges1998some}, we extend the Bayes linear exchangeable regressions methodology. We obtain hitherto missing desirable properties present in traditional Bayesian hierarchical models such as the ability to make individual group level inference.

Large scale computational models often have complex spatio-temporal boundary conditions. This is particularly true for Earth system modelling, when any simulation of part of the Earth system, requires other spatio-temporal fields as boundary conditions. Our application looked at palaeo-era ice-sheet modelling, where our model had a coupled atmosphere and ice-sheet, with the SST and SIC as prescribed boundary conditions. These are usually specified using results from a reference simulation, or using a member of a Model Intercomparison Project (MIP). However, individual simulations of Earth system components are known to have biases and any individual simulation cannot adequately represent uncertainty due to boundary conditions. An idea for future investigation is to use the differences between MIP phases to estimate the ensemble discrepancy. This would allow for differences between older and newer MIP phases to inform the discrepancy between the newer models and reality. Whilst outside the scope of this work, using MIP iterations to inform model discrepancy is an interesting avenue, especially for models of the present-day where lots of data are available for validation.

Our methodology allows MIP simulations to be combined with observations efficiently, leading to joint reconstructions of climate boundary conditions that can be used in any area of Earth System modelling. We demonstrate its efficacy by reconstructing last glacial maximum SST and SIC to force an ice-sheet model. We show that the differences between reference ice-sheets and ice-sheets under plausible boundary conditions were considerable and that the uncertainty in the ice-sheet due to propagated boundary condition uncertainty is not ignorable. Other aspects of the Earth system are likely to be sensitive to their boundary conditions, so that joint reconstructions of the type we present here would allow MIP simulations and data to be combined in order to correct existing biases and quantify an important source of uncertainty. The use of MIP ensembles to drive simulations of Earth system components leads to important questions around how these ensembles should be designed. Our method makes the case that a priori exchangeability across as many models as possible is an important design goal.

\spacingset{1}
\section{Acknowledgements}
\spacingset{\spacing}

\if1\blind
{ All authors were funded by UKRI Future Leaders Fellowship MR/S016961/1. Climate-ice sheet simulations were undertaken on ARC4, part of the High Performance Computing Facilities at the University of Leeds.} \fi
\if0\blind
{ \textit{Acknowledgements have been redacted for blind review.}} \fi

\bibliographystyle{model2-names.bst}
\bibliography{./paper}

\end{document}